\newtheorem{theorem}{Theorem} 
\newtheorem{example}{Example}
\newtheorem{definition}{Definition}
\newtheorem{lemma}{Lemma}
\newtheorem{remark}{Remark}
\begin{document}
\title{Construction of CCC and ZCCS Through Additive Characters Over Galois Field} 


 \author{
   \IEEEauthorblockN{Gobinda Ghosh}
   \IEEEauthorblockA{School of Technology\\
                     Woxsen University, Hyderabad, Telengana\\
                     Email: sagarghosh798@gmail.com}
   \and
   \IEEEauthorblockN{Sachin Pathak}
   \IEEEauthorblockA{Department of Mathematics and Basic Science\\ 
                     NIIT University, Rajasthan, India
 \\
                     Email: sachiniitk93@gmail.com}
 }

\maketitle


\begin{abstract}

With the rapid progression in wireless communication technologies, especially in multicarrier code-division multiple access (MC-CDMA), there is a need of advanced code construction methods. Traditional approaches, mainly based on generalized Boolean functions, have limitations in code length versatility. This paper introduces a novel approach to constructing complete complementary codes (CCC) and Z-complementary code sets (ZCCS), for reducing interference in MC-CDMA systems. The proposed construction, distinct from Boolean function-based approaches, employs additive characters over Galois fields GF($p^{r}$), where $p$ is prime and $r$ is a positive integer. First, we develop CCCs with lengths of $p^{r}$, which are then extended to construct ZCCS with both unreported lengths and sizes of $np^{r}$, where $n$ are arbitrary positive integers. The versatility of this method is further highlighted as it includes the lengths of ZCCS reported in prior studies as special cases, underscoring the method's comprehensive nature and superiority.

\end{abstract}

\section{Introduction}

The advancement of wireless communication technologies, particularly in the field of multicarrier code-division multiple access (MC-CDMA), is a critical area of research in today's rapidly evolving digital era. Traditional code construction methods, primarily based on generalized Boolean functions, have been instrumental yet exhibit certain constraints, particularly in terms of code length versatility. This paper introduces a new approach to the development of complete complementary codes (CCC) and  Z-complementary code sets (ZCCS), which are essential for reducing interference in wireless communication systems.\\
Fan \textit{et al}. \cite{fan2007z} significantly contributed to the development of ZCCS for MC-CDMA systems, a pivotal advancement highlighted in their study \cite{fan2007z}. These ZCCSs, distinguished by their zero correlation zone (ZCZ), play a crucial role in enhancing the performance of MC-CDMA, particularly in quasi-synchronous channels by enabling interference-free operations without the need for power adjustment \cite{chen2007next}. Research in the area of ZCCS has spanned various construction methodologies, both direct \cite{wu2018optimal,sarkar2020direct,wu2020z,shen2022new,ghosh2022direct,sarkar2020construction,xie2021constructions} and indirect \cite{adhikary2019new,yu2022new,b}.

Existing methods in the current literature are limited in terms of the lengths and sizes available for optimal ZCCS. This constraint has motivated our research to explore new methodologies for constructing ZCCS with novel lengths and sizes previously unattained in the existing literature. Our novel approach diverges from traditional boolean function-based methodologies by employing additive characters over finite Galois fields. This innovative approach enables the construction of CCC with lengths of $p^{r}$, and ZCCS with both unreported lengths and sizes of $np^{r}$, where $n,r$ are arbitrary positive integers. Furthermore, the versatility of this method is exemplified by the fact that the lengths of ZCCS can be included in those reported in important previous studies, such as those in \cite{xie2021constructions,ghosh2022direct,ghosh2023construction} as special cases within our broader framework.
\par The structure of the rest of this work is organized as follows: Section II delves into essential preliminary concepts. The methodology for constructing CCC is detailed in Section III. Section IV is dedicated to the proposed methodology for the development of ZCCS. The paper concludes with key findings and observations in Section V.

\section{Preliminary}\label{Sect 2}
In this section, we lay the foundational concepts and essential lemmas that will be integral to our proposed construction approach.
\begin{definition}
     Let
$\mathbf{a}=(a_{0},\hdots,a_{l-1})$ and $\mathbf{b}=(b_{0},\hdots ,b_{l-1})$ be two complex valued sequences.
We define the aperiodic cross-correlation function (ACCF) between $\mathbf{a}$ and $\mathbf{b}$ as
\begin{equation}\label{equ:cross}
\Phi(\mathbf{a}, \mathbf{b})(\tau)=\begin{cases}
\sum_{k=0}^{l-1-\tau}a_{k}\overline b_{k+\tau}, & 0 \leq \tau < l, \\
\sum_{k=0}^{l+\tau -1}a_{k-\tau}\overline b_{k}, & -l < \tau < 0, \\
0, & \text{otherwise},
\end{cases}
\end{equation}
where $\overline b_{k}$ denotes the complex conjugate of $b_{k}$.
If $\mathbf{a}$ is equal to $\mathbf{b}$, then the resulting function is termed the aperiodic auto-correlation function (AACF) of $\mathbf{a}$, denoted as $\Phi(\mathbf{a})$.
\end{definition}
\begin{definition}
    Consider the collection of sequences \(\mathbf{D}_{i}=\{\mathbf{a}_{k}^{i}:0 \leq k < m\}\) and \(\mathbf{D}_{j}=\{\mathbf{a}_{k}^{j}:0 \leq k < m\}\), each comprising \(m\) sequences. For each sequence \(\mathbf{a}_{k}^{i}\), it is represented as \(({a}_{k,0}^{i},\ldots,{a}_{k,l-1}^{i})\). The aperiodic cross-correlation sum (ACCS) for these two sets of sequences, \(\mathbf{D}_{i}\) and \(\mathbf{D}_{j}\), is defined by:
$$\Phi(\mathbf{D}_{i},\mathbf{D}_{j})(\tau)=\sum_{k=0}^{m-1}\Phi(\mathbf{a}_{k}^{i},\mathbf{a}_{k}^{j})(\tau).$$
\end{definition}

\begin{definition}
Consider \(\mathbf{D}=\{\mathbf{D}_{0},\ldots,\mathbf{D}_{s-1}\}\) as a collection of \(p\) such sequence sets. The code set \(\mathbf{D}\) is defined as a \((s,m,l,z)\)-ZCCS (refer to \cite{ghosh2022direct}) if it satisfies the following criteria:
\begin{eqnarray}
\Phi(\mathbf{D}_{\nu_1},\mathbf{D}_{\nu_2})(\tau)
=\begin{cases}
ml, & \tau=0, \nu_1=\nu_2,\\
0, & 0<|\tau|<z, \nu_1=\nu_2,\\
0, & |\tau|< z, \nu_1\neq \nu_2,
\end{cases}
\end{eqnarray}
when $z=l,s=n$ it refears as CCC with parameter $(p,p,l)-CCC.$
\end{definition}
\begin{lemma}\cite{liu2011correlation}
For a \((s,m,l,z)\)-ZCCS configuration, it is established that \(s \leq m \left\lfloor \frac{l}{z} \right\rfloor\), where \(s\), \(m\), \(l\), and \(z\) represent the number of users, the number of sub-carriers, the code length, and the ZCZ width, respectively. The ZCCS is considered optimal when
\begin{equation}
     s = m \left\lfloor \frac{l}{z} \right\rfloor,
\end{equation}
where $\left\lfloor . \right\rfloor$ denotes floor function.
\end{lemma}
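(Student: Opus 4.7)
The plan is to prove the bound via an orthogonality argument coupled with a dimension count, a standard approach for Welch-type correlation bounds.

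First, for each pair $(i,\tau)$ with $0 \leq i < s$ and $0 \leq \tau < z$, I would construct a lifted vector $v_i^{(\tau)} \in \mathbb{C}^{m(l+z-1)}$ by concatenating the $m$ sequences of $\mathbf{D}_i$, each individually zero-padded to length $l+z-1$ and translated to the right by $\tau$ positions. A short computation using the definition of the aperiodic cross-correlation sum gives
\[
\langle v_i^{(\tau_1)}, v_j^{(\tau_2)} \rangle \;=\; \Phi(\mathbf{D}_i,\mathbf{D}_j)(\tau_1-\tau_2).
\]
Since $|\tau_1-\tau_2|<z$, the ZCCS property forces this inner product to equal $ml\,\delta_{ij}\,\delta_{\tau_1,\tau_2}$, so $\{v_i^{(\tau)} : 0\le i < s,\ 0\le \tau < z\}$ is a family of $sz$ pairwise orthogonal vectors, each of squared norm $ml$.

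Second, I would upper bound the dimension of the subspace of $\mathbb{C}^{m(l+z-1)}$ that actually contains these vectors by $m z \lfloor l/z \rfloor$. Writing $l = qz + r$ with $q = \lfloor l/z \rfloor$ and $0 \leq r < z$, the idea is that the tail coordinates corresponding to the length-$r$ remainder and the $z-1$ zero-padding positions contribute no independent directions: every shift $\tau<z$ keeps the active portion of each block aligned with one of the $q$ disjoint length-$z$ windows of the original length-$l$ sequence, and the orthogonality conditions decompose along these windows. The inequality $sz \le mqz$ then gives $s \le m\lfloor l/z\rfloor$.

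The main obstacle is precisely this second step: the naive dimension count in $\mathbb{C}^{m(l+z-1)}$ yields only $s \le m(l+z-1)/z$, which is strictly weaker than the claimed floor bound when $l$ is not a multiple of $z$. Tightening it requires either a careful window-by-window rank analysis that exploits the zero-padding boundary conditions, or an alternative Fourier-analytic route in which a Parseval identity on a length-$N$ DFT produces the factor $\lfloor l/z\rfloor$ directly from the structure of the non-trivial correlation supports. I would attempt the Fourier route first, as it avoids the bookkeeping of overlapping shifts and degenerate boundary windows.
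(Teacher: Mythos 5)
This lemma is quoted from \cite{liu2011correlation} and the paper supplies no proof of its own, so I am judging your argument on its merits. Your first step is correct and standard: the lifted vectors $v_i^{(\tau)}$, $0\le i<s$, $0\le\tau<z$, are pairwise orthogonal of squared norm $ml$ by the ZCCS conditions (the sign convention checks out, since $\sum_t a_{t-\tau_1}\overline{b_{t-\tau_2}}=\Phi(\mathbf{a},\mathbf{b})(\tau_1-\tau_2)$ and $|\tau_1-\tau_2|<z$), and this yields $sz\le m(l+z-1)$. The problem is that your entire proof of the lemma rests on the second step, and that step is not an argument but a restatement of the conclusion. Because the $sz$ vectors are orthogonal, their span has dimension exactly $sz$; asserting that this span sits inside a subspace of dimension $mz\lfloor l/z\rfloor$ is logically equivalent to the bound you are trying to prove. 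The specific mechanism you offer --- that ``every shift $\tau<z$ keeps the active portion of each block aligned with one of the $q$ disjoint length-$z$ windows'' --- is false: a shift by $\tau$ with $0<\tau<z$ moves each length-$z$ window so that it straddles two adjacent windows of the unshifted grid, and moreover the shifted vectors genuinely occupy the $z-1$ padding coordinates, so neither the remainder coordinates nor the padding coordinates can be discarded. The orthogonality relations therefore do not decompose window by window, and the Fourier alternative is only named, never executed.

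As it stands you have proved only $s\le m(l+z-1)/z$, which is strictly weaker than $s\le m\lfloor l/z\rfloor$ even when $z\mid l$ (e.g.\ $m=2$, $l=4$, $z=2$ gives $s\le 5$ versus the claimed $s\le 4$), so the lemma is not established. To close the gap you should follow the route actually used in \cite{liu2011correlation}, which is a Levenshtein-type argument: one forms a weighted sum of the Gram inner products with a weight vector supported on a suitable subset of shifts (or, equivalently, reduces to the periodic Tang--Fan--Matsufuji-style count after restricting to shifts in steps that respect the length-$z$ windows), and the factor $\lfloor l/z\rfloor$ emerges from optimizing that weight vector rather than from a raw dimension count. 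Incidentally, when you apply the lemma to check optimality of the paper's Theorem~2, note that the parameters there should be read as $(s,m,l,z)=(nq,q,nq,q)$ --- length $nq$ and ZCZ width $q$ --- for the equality $s=m\lfloor l/z\rfloor$ to make sense.
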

\subsection{Characters on Finite Fields}
In this section, we discuss the concept of characters on finite fields, as detailed in \cite{lidl1997finite,ireland1990classical}.

    Let  $p$ be a prime number and $q=p^{r}$ for some positive integer $r$ and $GF(q)$ be a finite Galois field of order $q=p^r$.  The absolute trace from the field $GF(q)$ to its prime subfield $GF(p)$ is given by:
$$
\begin{aligned}
\operatorname{Tr}: GF(q) & \rightarrow GF(p) \\
c & \mapsto c + c^p + c^{p^2} + \cdots + c^{p^{r-1}} .
\end{aligned}
$$
A  additive character $\chi_b$ where $b\in GF(q) $ is a homomorphism from $GF(q)$ to the field of complex number $\mathbf{C}$ defined by
$$
\chi_b(c) = e^{\frac{2 \pi i}{p} \operatorname{Tr}(bc)}, \text{ for all } c \in GF(q).
$$
When $b=1$ it is called a canonical additive character. Furthermore, any additive character over \( GF(q) \) can be derived as:
$$
\chi_b(c) := \chi_1(bc), \text{ for } b, c \in GF(q).
$$
    Let \( \chi_a \) and \( \chi_b \) be additive characters of \( GF(q) \). Then the orthogonality relations for these characters are given by:
\begin{equation*}
\sum_{c \in GF(q)} \chi_a(c) \overline{\chi_b(c)} = 
\begin{cases}
q, & \text{if } a = b \\
0, & \text{if } a \neq b .
\end{cases}
\end{equation*}
\section{Constructions of CCC}

Let $q=p^{r}$ and $\alpha$ be a primitive element of $GF^{*}(q)$, where $GF^{*}(q)$ is the nonzero elements of $GF(q)$ . For $0\leq i\leq q-1$ we define
\begin{equation}
    a(i) = 
\begin{cases} 
0 & \text{if } i = 0, \\
\alpha^{i-1} & \text{for } 0 < i \leq p^{r}-1.
\end{cases}
\end{equation}

For any integer $0\leq k \leq q-1$ we define the vector representation as $\mathbf{k}=(k_{1},k_{2},\ldots,k_{r})$, where  $$k=\sum_{i=1}^{r}k_{i}p^{i-1},$$ $0\leq k_{i}\leq p-1$. Let $0\leq k,l \leq q-1$ and  $S_{k,l}$ be the function defined from $\{0,1,\ldots,q-1\}$ to the field of complex number $\mathbf{C}$ defined by 
\begin{equation}
\label{Krishna}
S_{k,l}(i)=\omega_{p}^{(\mathbf{k}.\mathbf{i})+Tr\left(a(i)a(l)\right)},  
\end{equation}
where $\mathbf{k}$ and $\mathbf{i}$ are the vectors corresponding to $k$ and $i$ respectively and $\cdot$ denote usual dot product.
We define the sequence corresponding to $S_{k,l}$ as 
$$\psi(S_{k,l})=\left(S_{k,l}(0),S_{k,l}(1),\ldots,S_{k,l}(q-1)\right).$$
\begin{theorem}
Let $q=p^{r}$ and $\alpha$ be the primitive element in $GF^{*}(q)$ and $S_{k,l}$ be the function defined in (\ref{Krishna}), now consider the ordered set of sequences or code as 
\begin{equation}
    \psi(S_{k})=\left\{\psi(S_{k,l}):0\leq l \leq q-1  \right\}.
\end{equation}
Define the set of codes as 
\begin{equation}
    S=\{ \psi(S_{k}):0\leq k\leq q-1\}.
\end{equation}
Then the code set $S$ forms CCC with parameter $(p^{r},p^{r},p^{r}).$
\end{theorem}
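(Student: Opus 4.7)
The plan is to compute the ACCS $\Phi(\psi(S_{k_1}),\psi(S_{k_2}))(\tau)$ directly from the definition and reduce it to a character sum over $GF(q)$, where orthogonality will finish the job. First I would write
$$\Phi(\psi(S_{k_1}),\psi(S_{k_2}))(\tau)=\sum_{l=0}^{q-1}\sum_{i=0}^{q-1-\tau}S_{k_1,l}(i)\overline{S_{k_2,l}(i+\tau)}$$
for $0\leq\tau<q$, substitute formula~(\ref{Krishna}), and swap the order of summation so that the phase $\omega_p^{\mathbf{k_1}\cdot\mathbf{i}-\mathbf{k_2}\cdot(\mathbf{i+\tau})}$ is pulled out and only the factor $\omega_p^{\mathrm{Tr}(a(l)(a(i)-a(i+\tau)))}$ is left inside the $l$-sum.

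Next I would exploit the fact that the map $l\mapsto a(l)$ is a bijection from $\{0,1,\ldots,q-1\}$ onto $GF(q)$: by construction $a(0)=0$ and $a(1),\ldots,a(q-1)=\alpha^0,\ldots,\alpha^{q-2}$ exhaust $GF^*(q)$ because $\alpha$ is primitive. This lets me rewrite the inner sum as
$$\sum_{l=0}^{q-1}\omega_p^{\mathrm{Tr}(a(l)(a(i)-a(i+\tau)))}=\sum_{c\in GF(q)}\chi_1\bigl(c(a(i)-a(i+\tau))\bigr),$$
and the orthogonality relation stated in the preliminaries evaluates this to $q$ when $a(i)=a(i+\tau)$ and to $0$ otherwise. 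Because $a$ is injective, the first case happens iff $\tau=0$.

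I would then split into two cases. For $\tau\neq 0$, every inner $l$-sum vanishes, so $\Phi(\psi(S_{k_1}),\psi(S_{k_2}))(\tau)=0$ for every choice of $k_1,k_2$; this handles the auto-correlation sidelobes and the cross-correlations at nonzero shifts. For $\tau=0$, what remains is $q\sum_{i=0}^{q-1}\omega_p^{(\mathbf{k_1}-\mathbf{k_2})\cdot\mathbf{i}}$; using the bijection $i\leftrightarrow(i_1,\ldots,i_r)\in\{0,\ldots,p-1\}^r$, this factors as a product of $r$ standard geometric sums $\sum_{i_j=0}^{p-1}\omega_p^{(k_{1,j}-k_{2,j})i_j}$, which equals $p$ on each coordinate when $k_1=k_2$ and vanishes on at least one coordinate otherwise. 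Hence $\Phi(\psi(S_{k_1}),\psi(S_{k_2}))(0)=p^{2r}$ if $k_1=k_2$ and $0$ if $k_1\neq k_2$, matching the definition of a $(p^r,p^r,p^r)$-CCC.

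The only non-routine step is the second paragraph: recognising that the apparently discrete sum over the index $l\in\{0,\ldots,q-1\}$ can be reinterpreted as a sum of a canonical additive character evaluated over all of $GF(q)$, so that the character orthogonality relation becomes applicable. Everything afterwards is bookkeeping involving geometric sums of $p$-th roots of unity, and the case analysis matches the CCC conditions exactly.
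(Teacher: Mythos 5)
Your proposal is correct and follows essentially the same route as the paper: reduce the inner sum over $l$ to an additive character sum over $GF(q)$, apply the orthogonality relation to kill all $\tau\neq 0$ terms, and evaluate the $\tau=0$ case via geometric sums of $p$-th roots of unity in each coordinate of $\mathbf{k_1}-\mathbf{k_2}$. You are somewhat more explicit than the paper about why orthogonality applies (the bijectivity of $l\mapsto a(l)$ and the injectivity of $a$ forcing $a(i)\neq a(i+\tau)$ for $\tau\neq 0$), but this is a filling-in of detail rather than a different argument.
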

\begin{proof}
    Let $\psi(S_{k_{1}})$ and  and $\psi(S_{k_{2}})$ be two codes in $S$. The ACCF between $\psi(S_{k_{1}})$ and $\psi(S_{k_{2}})$ is defined as
    \begin{equation}
    \begin{split}
        &\Phi\left(\psi(S_{k_{1}}),\psi(S_{k_{1}})\right)(\tau)\\
        &=\sum_{l=0}^{p^{r}-1}\sum_{i=0}^{p^{r}-1-\tau} \omega_{p}^{(\mathbf{k_{1}}.\mathbf{i})- (\mathbf{k_{2}}.\mathbf{(i+\tau)})} \omega_{p}^{Tr\left(a(i)a(l)\right)-Tr\left(a(i+\tau)a(l)\right)}\\
        &=\sum_{i=0}^{p^{r}-1-\tau} \left(\sum_{l=0}^{p^{r}-1}\chi_{a(i)}(a(l))\overline{\chi_{a(i+\tau)}(a(l))}
\right)\omega_{p}^{(\mathbf{k_{1}}.\mathbf{i})- (\mathbf{k_{2}}.\mathbf{(i+\tau)})}
    \end{split}
    \end{equation}
\begin{enumerate}
    \item Case 1 $(k_{1}\neq k_{2})$: In this case we have two subcases
    \begin{enumerate}
        \item Subcase $( \tau \neq 0 ):$ In this case we have
    $$\sum_{l=0}^{p^{r}-1}\chi_{a(i)}(a(l))\overline{\chi_{a(i+\tau)}(a(l))}=0.$$
    \item Subcase $( \tau = 0 ):$ In this case we have 
    $$\sum_{i=0}^{p^{r}-1}\omega_{p}^{(\mathbf{k_{1}}.\mathbf{i})- (\mathbf{k_{2}}.\mathbf{i})}=0.$$
    \end{enumerate}
\item Case 2 $(k_{1}= k_{2})$: In this case we have two following subcases
\begin{enumerate}
    \item Subcase $( \tau \neq 0 ):$ In this case we have
     $$\sum_{l=0}^{p^{r}-1}\chi_{a(i)}(a(l))\overline{\chi_{a(i+\tau)}(a(l))}=0.$$
     \item Subcase $( \tau = 0 ):$ In this case we have
     $$\Phi(\psi(S_{k_{1}},\psi(S_{k_{1}})(0)=p^{2r}.$$
\end{enumerate}
Hence combining case 1 and case 2 we conclude that the code set $S$ forms CCC with parameter $(p^{r},p^{r},p^{r})$.    
\end{enumerate}
\end{proof}
\begin{example}
    Let $r=2,p=3$ and consider the Galois field $F_{3}[x]/<x^{2}+x+2>$=GF($3^{2}$). Let $\alpha$ be a primitive element with $\alpha^{2}+\alpha+2=0$. Consider the function $S_{k,l}:\{0,1,\ldots,8\}\rightarrow $
    $\mathbf{C}$ given by 
    $$S_{k,l}(i)=\omega_{3}^{(\mathbf{k}.\mathbf{i})+Tr\big((a(i)a(l)\big)},$$ where $0\leq i,k,l \leq 8$. Therefore we have 
$$\psi(S_{k})=\{\psi(S_{k,l}):0\leq l\leq 8\},$$
where $$\psi(S_{k,l})=(S_{k,l}(0),S_{k,l}(1),\ldots,S_{k,l}(8)),$$ and $$S_{k,l}(i)=\omega_{3}^{\mathbf{k}.\mathbf{i}}\chi_{a(i)}\left(a(l)\right).$$Hence the set
\begin{equation}
    S=\{ \psi(S_{k}):0\leq k\leq 8\},
\end{equation}
forms CCC with parameter $(9,9,9)$.  Fig. \ref{fig3} and Fig. \ref{fig4} show the auto and cross-correlation results of any two codes from $S$.
\end{example}
\begin{figure}
\centering
\includegraphics[width=0.5\textwidth]{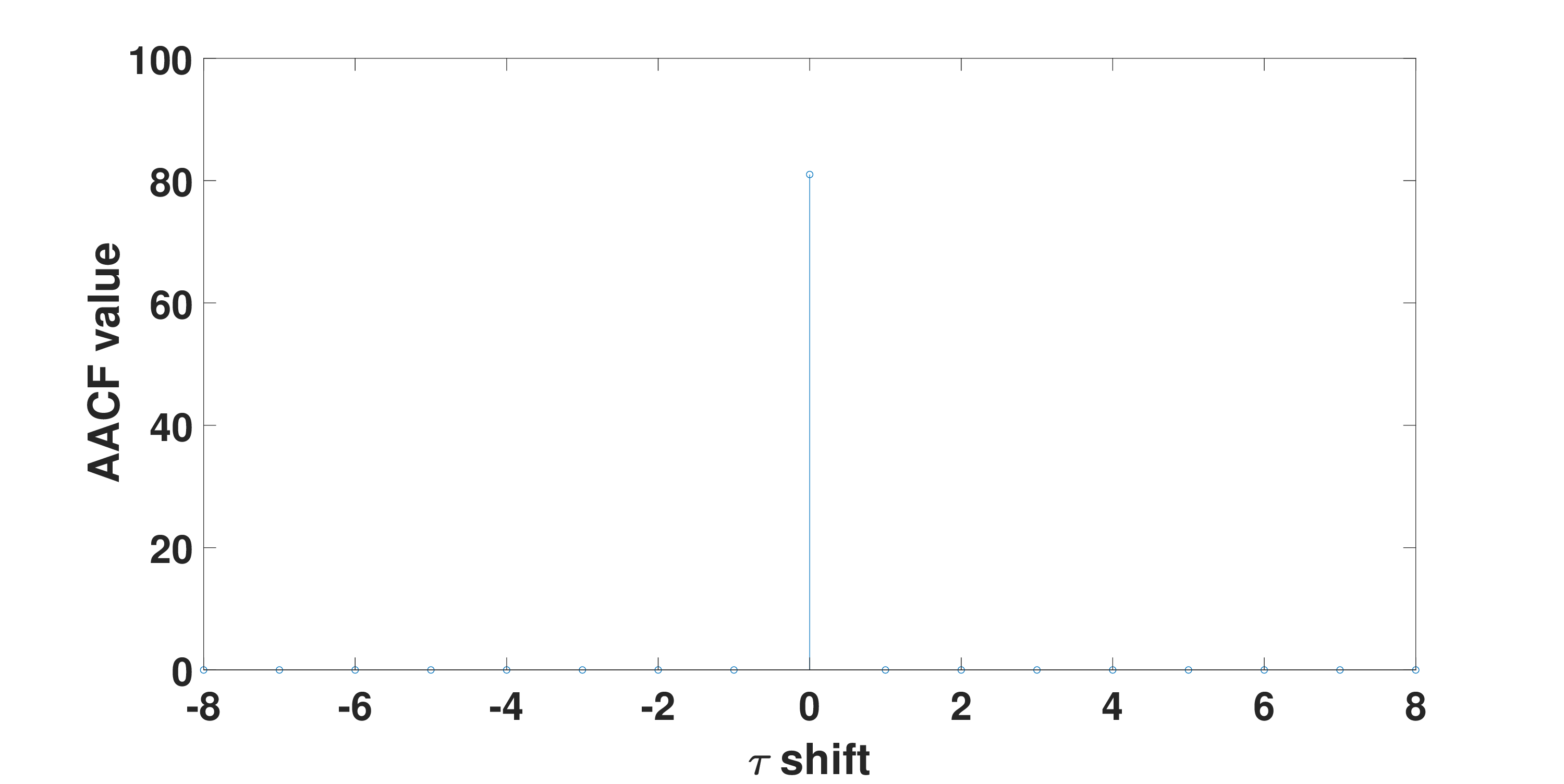}
\caption{Auto-correlation result of any set of array from $\mathcal{S}$}\label{fig3}
\end{figure}
\begin{figure}
\centering
\includegraphics[width=0.5\textwidth]{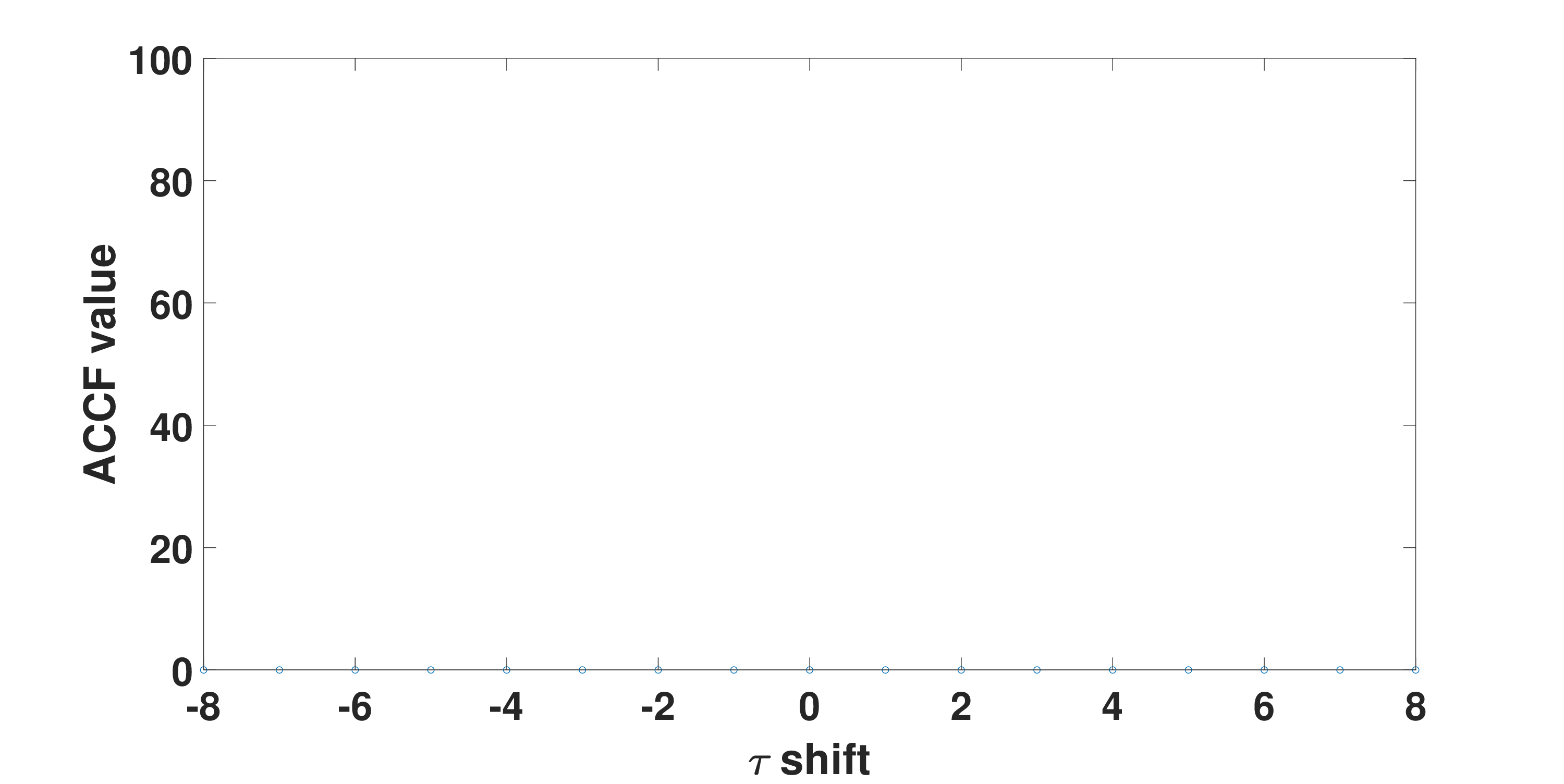}
\caption{Cross-correlation result of any set of array from $\mathcal{S}$}\label{fig4}
\end{figure}
\section{Constructions of ZCCS}
Let $p_{1},p_{2},\ldots,p_{l}$ be $l$ primes and $q'=q\prod_{i=1}^{l}p_{i}$. Define the function $G_{k,l}:\{0,1,\ldots,q'-1\}\rightarrow \mathbf{C}$ by
$$G^{\mathbf{c}}_{k,l}(i')=S_{k,l}(i)\prod_{m=1}^{l}\omega_{p_{m}}^{c_{m}i_{m}},$$ where 
$$i'=i+i_{1}q+i_{2}p_{1}q+\ldots+i_{l}p_{l-1}\ldots p_{1}q,$$ and $0\leq i\leq q-1$ and $0\leq i_{t}\leq p_{t}-1,1\leq t\leq l$.
We define the sequence corresponding to $G^{\mathbf{c}}_{k,l}$ as $\psi(G^{\mathbf{c}}_{k,l})$ where
$$\psi(G^{\mathbf{c}}_{k,l})=(G^{\mathbf{c}}_{k,l}(0),G^{\mathbf{c}}_{k,l}(1),\ldots,G^{\mathbf{c}}_{k,l}(q'-1)).$$
\begin{theorem}
\label{RK}
    Consider the ordered set of sequences or code set as
    $$\psi(G^{\mathbf{c}}_{k})=\{\psi(G^{\mathbf{c}}_{k,l}):0\leq l \leq q-1 \}.$$
    Define the code set
    $$T=\{\psi(G^{\mathbf{c}}_{k}):0\leq k\leq q-1,0\leq c_{i}\leq p_{i}-1\}.$$ Then the code set $T$ forms an optimal ZCCS with parameter $(nq,q,q,nq)$ where $n=\prod_{i=1}^{l}p_{i}$.
\end{theorem}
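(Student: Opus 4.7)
The plan is to exploit the tensor-product structure of $G^{\mathbf{c}}_{k,l}$ and reduce every correlation computation to the CCC property of the inner code set from Theorem~1 combined with the orthogonality of characters on $\mathbb{Z}_{p_1}\times\cdots\times\mathbb{Z}_{p_l}$. Writing $i'=i+jq$ with $0\le i<q$ and $0\le j<n$, where $j=i_1+i_2p_1+\cdots+i_lp_{l-1}\cdots p_1$ is the mixed-radix expansion, I would first observe that
\[
G^{\mathbf{c}}_{k,l}(i')=S_{k,l}(i)\,\eta_{\mathbf{c}}(j),\qquad \eta_{\mathbf{c}}(j):=\prod_{m=1}^{l}\omega_{p_m}^{c_m i_m},
\]
where $\eta_{\mathbf{c}}$ is a character of the abelian group $\mathbb{Z}_{p_1}\times\cdots\times\mathbb{Z}_{p_l}$, and distinct $\mathbf{c}$ give distinct characters.

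Next, for a shift $\tau$ with $0\le \tau<q$, I would split the sum in $\Phi\bigl(\psi(G^{\mathbf{c}_1}_{k_1}),\psi(G^{\mathbf{c}_2}_{k_2})\bigr)(\tau)$ according to whether the base index satisfies $i+\tau<q$ (no carry, higher part stays $j$) or $i+\tau\ge q$ (carry, higher part becomes $j+1$). This produces two contributions:
\[
\Bigl(\sum_{j=0}^{n-1}\eta_{\mathbf{c}_1}(j)\overline{\eta_{\mathbf{c}_2}(j)}\Bigr)\,\Phi\bigl(\psi(S_{k_1}),\psi(S_{k_2})\bigr)(\tau)
\;+\;
\Bigl(\sum_{j=0}^{n-2}\eta_{\mathbf{c}_1}(j)\overline{\eta_{\mathbf{c}_2}(j+1)}\Bigr)\,R(\tau),
\]
where $R(\tau)$, after the change of variables $i''=i+\tau-q$, is (up to complex conjugation) $\Phi\bigl(\psi(S_{k_2}),\psi(S_{k_1})\bigr)(q-\tau)$ summed over $l$.

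The key observation is then that for $0<\tau<q$ the shift $q-\tau$ lies strictly in $(0,q)$, so by Theorem~1 the CCC property forces $R(\tau)=0$ regardless of $k_1,k_2$; the carry term vanishes outright. The no-carry term is then handled by cases: if $k_1\neq k_2$, Theorem~1 kills $\Phi(\psi(S_{k_1}),\psi(S_{k_2}))(\tau)$ for all $|\tau|<q$; if $k_1=k_2$ and $0<\tau<q$, the same lemma again gives zero; and if $k_1=k_2$, $\tau=0$, the inner factor equals $q^2$ while the character sum $\sum_j\eta_{\mathbf{c}_1}(j)\overline{\eta_{\mathbf{c}_2}(j)}$ is $n$ when $\mathbf{c}_1=\mathbf{c}_2$ and $0$ otherwise by character orthogonality. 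Negative shifts follow by conjugate symmetry. Putting everything together yields the defining ZCCS relations with autocorrelation value $nq^2=mq'$ at zero and ZCZ width $q$; the optimality bound $s=m\lfloor q'/z\rfloor=q\lfloor nq/q\rfloor=nq$ is then immediate.

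The main obstacle I anticipate is the bookkeeping of the carry term $R(\tau)$: one must carefully re-index the truncated sum $\sum_{i=q-\tau}^{q-1}S_{k_1,l}(i)\overline{S_{k_2,l}(i+\tau-q)}$ and identify it as a genuine aperiodic correlation of the inner CCC at shift $q-\tau$ so that Theorem~1 can be invoked; once this identification is made, everything else reduces to orthogonality of additive characters.
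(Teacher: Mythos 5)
Your proposal is correct and follows essentially the same route as the paper: the same decomposition of the correlation at shift $0\le\tau<q$ into a no-carry term weighted by $\Phi(\psi(S_{k_1}),\psi(S_{k_2}))(\tau)$ and a carry term weighted by the inner correlation at shift $\tau-q$, with Theorem~1 annihilating both inner factors for $0<\tau<q$ and character orthogonality over $\mathbb{Z}_{p_1}\times\cdots\times\mathbb{Z}_{p_l}$ settling the $\tau=0$ case. Your explicit identification of the carry term as a genuine aperiodic correlation at shift $q-\tau$ is exactly the bookkeeping the paper performs in its equation for $\Phi(\psi(G^{\mathbf{c}}_{k}),\psi(G^{\mathbf{c'}}_{k'}))(\tau)$, so nothing further is needed.
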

\begin{proof}
    For $0< \tau \leq q-1$, the ACCF between $\psi(G^{\mathbf{c}}_{k})$ and $\psi(G^{\mathbf{c'}}_{k'})$ can be derived as
    \begin{equation}
    \label{R}
    \begin{split}
&\Phi(\psi(G^{\mathbf{c}}_{k}),\psi(G^{\mathbf{c'}}_{k'}))(\tau)\\
&=\Phi\left(\psi(S_{k}),\psi(S_{k'})\right)(\tau)\sum_{\alpha=0}^{\prod_{i=1}^{l}p_{i}-1}\prod_{t=1}^{l}\omega_{p_{t}}^{(c_{t}-c'_{t})(\alpha_{t})}\\
&+\Phi\left(\psi(S_{k}),\psi(S_{k'})\right)(\tau-q)\sum_{\alpha=0}^{\prod_{i=1}^{l}p_{i}-2}\prod_{t=1}^{l}\omega_{p_{t}}^{(c_{t}(\alpha_{t})-c'_{t}(\alpha'_{t}))}
    \end{split}
    \end{equation}
    where $$\alpha = \alpha_{1}+\sum_{i=2}^{l} \left( \prod_{j=1}^{i-1} p_j \right) \alpha_i,~\alpha+1 = \alpha'_{1}+\sum_{i=2}^{l} \left( \prod_{j=1}^{i-1} p_j \right) \alpha'_i,$$ and $0\leq \alpha_{t},\alpha'_{t}\leq p_{t}-1 $.
    From Theorem 1 and (\ref{R}) we have 
\begin{equation}
\label{Hari}
    \Phi(\psi(G^{\mathbf{c}}_{k}),\psi(G^{\mathbf{c'}}_{k'}))(\tau)=0.
\end{equation}
For $\tau=0$, the ACCF between $\psi(G^{\mathbf{c}}_{k})$ and $\psi(G^{\mathbf{c'}}_{k'})$ can be derived as
\begin{equation}
\label{vrinda}
\begin{split}
    &\Phi(\psi(G^{\mathbf{c}}_{k}),\psi(G^{\mathbf{c'}}_{k'}))(0)\\
    &=\Phi\left(\psi(S_{k}),\psi(S_{k'})\right)(0)\sum_{\alpha=0}^{\prod_{i=1}^{l}p_{i}-1}\prod_{t=1}^{l}\omega_{p_{t}}^{(c_{t}-c'_{t})(\alpha_{t})}\\
    &=\Phi\left(\psi(S_{k}),\psi(S_{k'})\right)(0)\prod_{t=1}^{l}\left(\sum_{\alpha_{t}=0}^{p_{t}-1}\omega_{p_{t}}^{(c_{t}-c'_{t})(\alpha_{t})}\right).
\end{split}
\end{equation}
To derive the ACCF between $\psi(G^{\mathbf{c}}_{k})$ and $\psi(G^{\mathbf{c'}}_{k'})$ we have the following cases
\begin{enumerate}
    \item Case 1 $(k\neq k')$: In this case we have
    \begin{equation}
    \label{ban}
    \Phi\left(\psi(S_{k}),\psi(S_{k'})\right)(0)=0,    \end{equation}
     by \textit{theorem} 1. Therefore, from (\ref{vrinda}) and (\ref{ban})  we have $$\Phi(\psi(G^{\mathbf{c}}_{k}),\psi(G^{\mathbf{c'}}_{k'}))(0)=0.$$
    \item Case 2 $(\mathbf{c}\neq\mathbf{c'})$: Then at least one of $c_{t}$ and $c'_{t}$ deffers. In this case, we have
    \begin{equation}
    \label{bharat}\sum_{\alpha_{t}=0}^{p_{t}-1}\omega_{p_{t}}^{(c_{t}-c'_{t})(\alpha_{t})}=0. 
    \end{equation}
     Therefore from (\ref{bharat}) and (\ref{vrinda}) we have $$\Phi(\psi(G^{\mathbf{c}}_{k}),\psi(G^{\mathbf{c'}}_{k'}))(0)=0.$$
    \item Case 3 $(\mathbf{c}=\mathbf{c'}),k=k'$: In this case we have
    $$\Phi(\psi(G^{\mathbf{c}}_{k}),\psi(G^{\mathbf{c'}}_{k'}))(0)=q^{2}\left(\prod_{i=1}^{l}p_{i}\right).$$ Hence by (\ref{Hari}), case 1, case 2 and case 3 we conclude that the set $T$ forms optimal ZCCS with parameter $(nq,q,q,nq)$. 
\end{enumerate}
\end{proof}
\begin{example}
\label{K2}
    Let  $p_{1}=2,q=9$ and $S_{k,l}:\{0,1,\ldots,8\}\rightarrow $
    $\mathbf{C}$ be the function as defined in example 1. We define the function $G_{k,l}:\{0,1,\ldots,17\}\rightarrow \mathbf{C}$ by
$$G^{\mathbf{c}}_{k,l}(i')=S_{k,l}(i)\omega_{p_{1}}^{c_{1}i_{1}},$$ where 
$i'=i+i_{1}9$ and $0\leq i\leq 8$ and $0\leq i_{1}\leq 1$. Consider $$\psi(G^{\mathbf{c}}_{k})=\{\psi(G^{\mathbf{c}}_{k,l}):0\leq l \leq 8 \},$$ where $$\psi(G^{\mathbf{c}}_{k,l})=(G^{\mathbf{c}}_{k,l}(0),G^{\mathbf{c}}_{k,l}(1),\ldots,G^{\mathbf{c}}_{k,l}(17)).$$ Hence by theorem \ref{RK} the code set $$T=\{\psi(G^{\mathbf{c}}_{k}):0\leq k\leq 8,0\leq c_{1}\leq 1\},$$ forms optimal ZCCS with parameter $(18,9,9,18)$. Fig. \ref{fig5} and Fig. \ref{fig6} shows the auto and cross-correlation result from any codes from $T$ in example \ref{K2}.  
\end{example}
\begin{figure}
\centering
\includegraphics[width=0.5\textwidth]{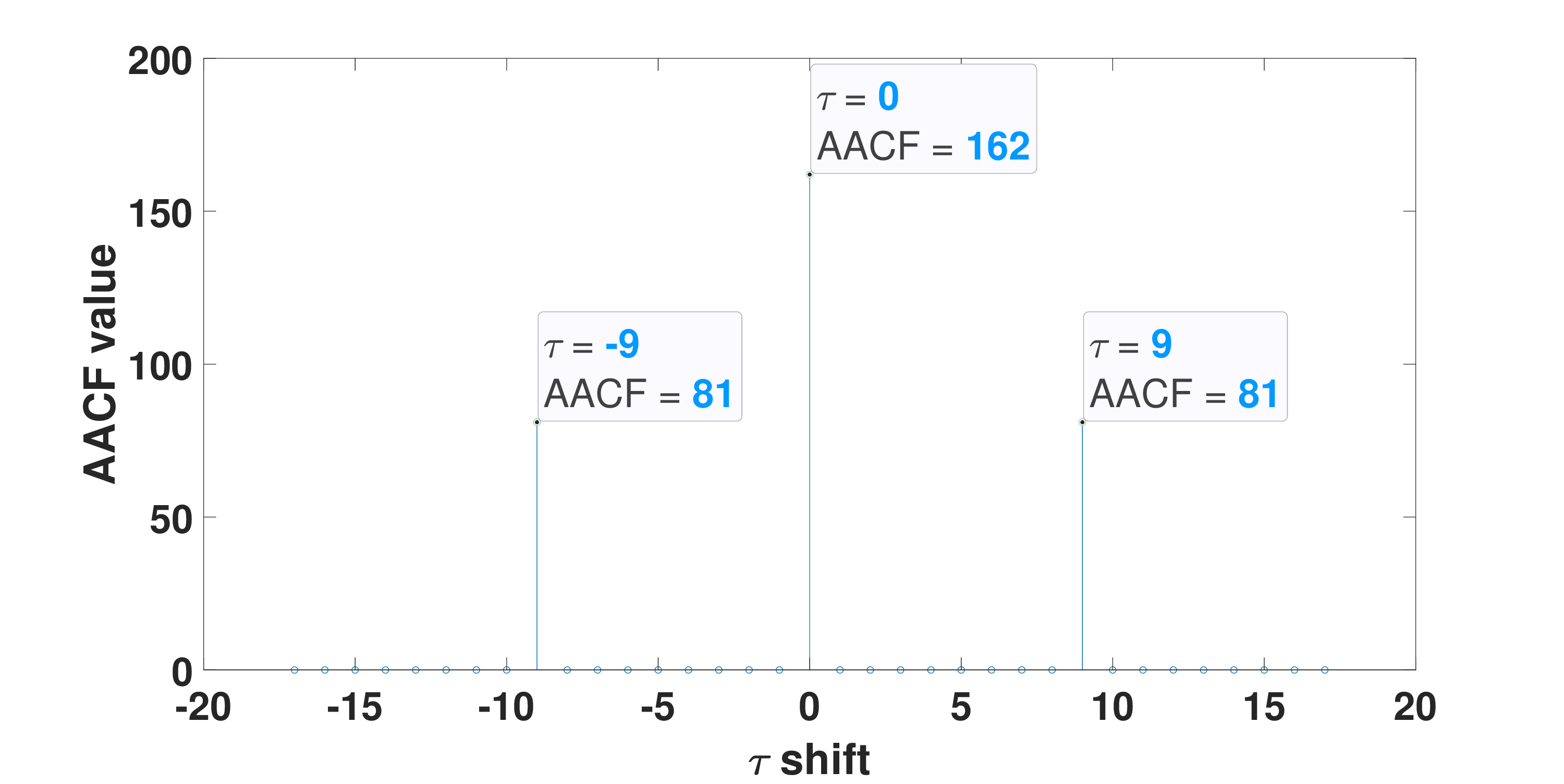}
\caption{Auto-correlation result of any set of array from $T$}\label{fig5}
\end{figure}
\begin{figure}
\centering
\includegraphics[width=0.5\textwidth]{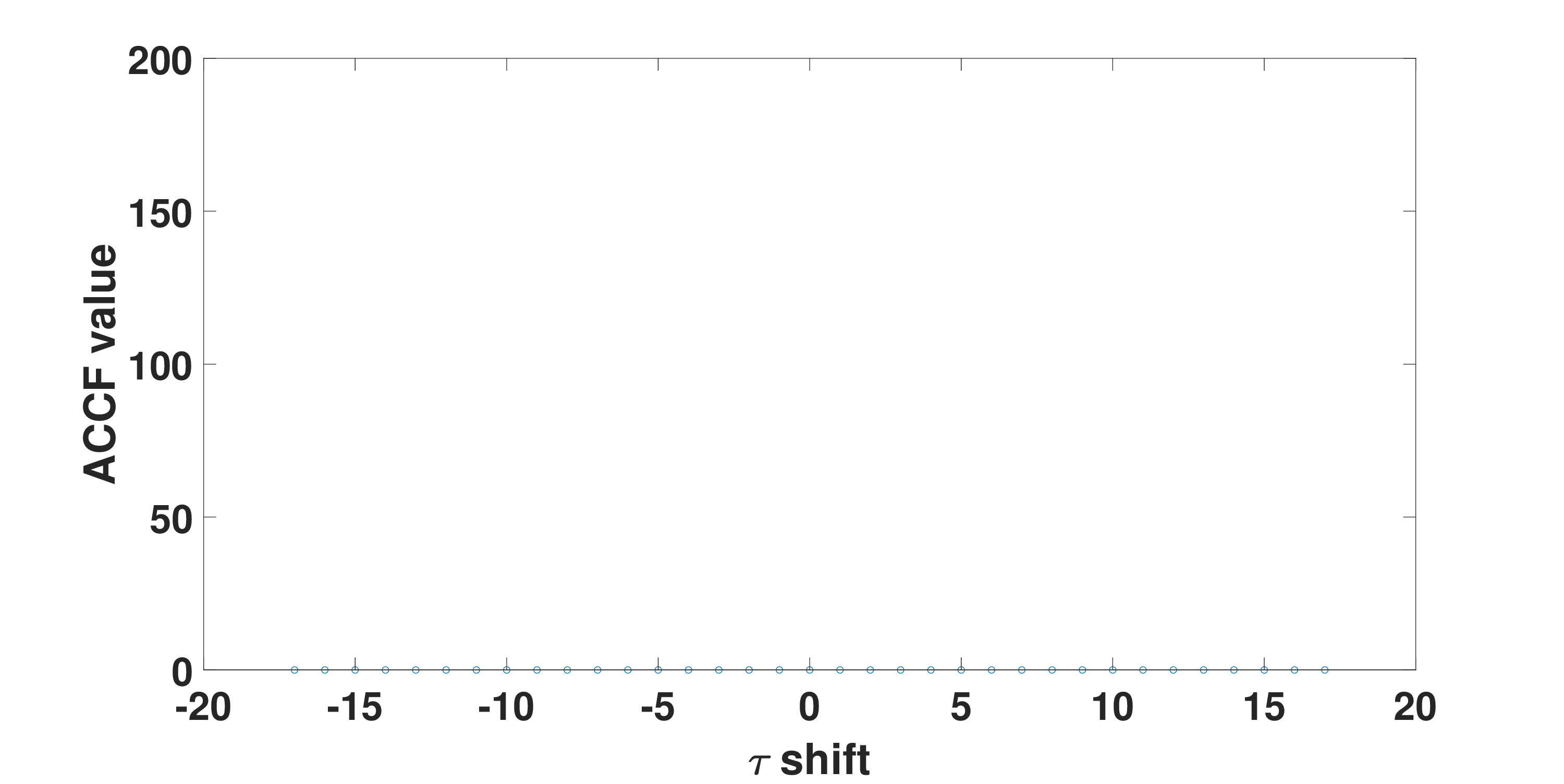}
\caption{Cross-correlation result of any set of array from $T$}\label{fig6}
\end{figure}
\begin{table}[htbp]
\caption{Comparison with Existing Optimal ZCCS}
\label{tab3}
\centering
\begin{tabular}{|l|l|c|c|c|}
\hline
Source & Based On & Length & Users & ZCZ \\
\hline
\cite{shen2023new} & Boolen function & $q^{m}$ & $q^{v+1}$ & $q$ \\
\hline
\cite{xie2021constructions} & Boolen function & $3.2^{m}$ & $2^{k+1}$ & $2^{m}$\\
\hline
\cite{ghosh2022direct} & Boolen function & $\prod_{i=1}^{l}p_{i}2^m$ & $(\prod_{i=1}^{l}p_{i}2^{n+1})$ & $2^{m}$ \\
\hline
\cite{ghosh2023construction} & Boolen function & $R2^m$ & $R2^{k+1}$ & $2^{m}$ \\
\hline
Proposed & Galois Field & $np^{r}$ & $np^{r}$ & $p^{r}$\\
\hline
\end{tabular}
\end{table}
\begin{remark}
Our proposed construction method generates ZCCS of length \(np^{r}\), where \(n=\prod_{i=1}^{l}p_{i}\) and each \(p_i\) is a prime number. By choosing specific values such as \(l=1\), \(p=2\), and \(p_{1}=3\), our construction yields ZCCS of length \(3.2^{r}\). This indicates that the ZCCS lengths described in \cite{xie2021constructions} can be considered as a special case within our framework. Furthermore, with an appropriate selection of \(n\), \(p\), and \(r\), it can be observed that the ZCCS lengths mentioned in \cite{ghosh2022direct,ghosh2023construction} also emerge as special cases of our construction.
\end{remark}

\section{Conclusion}

This work proposes new optimal ZCCS, which are useful in wireless communication especially in MC-CDMA systems to accommodate a larger number of users. Our approach diverges from the conventional generalized Boolean function-based constructions. By utilizing additive characters over finite Galois fields, we have developed CCC and ZCCS.  The CCC in our proposed construction has a length of the form $p^{r}$. For the ZCCS, both the length and sizes are characterized by $np^{r}$, where $n,r$ are any positive numbers, and $p$  is a prime number. This particular length and user capacity of the proposed ZCCS are previously unreported in the literature. Additionally, several existing lengths of ZCCS are encompassed as special cases within our framework.

\IEEEtriggeratref{14}
\bibliographystyle{IEEEtran}
\bibliography{bibliography.bib}

\end{document}